\newcommand{\Oplus}{\mathlarger{\mathlarger{\mathlarger{\oplus}}}}
\newcommand{\pk}{\mathcmd{\mathit{pk}}} % public key
\newcommand{\sk}{\mathcmd{\mathit{sk}}} % secret key
\newcommand{\msg}{\mathcmd{m}} % message
\newcommand{\ctxt}{\mathcmd{c}} % ciphertext
\newcommand{\mathcmd}[1]{\ensuremath{#1}\xspace} % to use a command both in math mode and non-math mode
\newcommand{\Gen}{\mathcmd{\mathsf{Gen}}} % key generation algorithm
\newcommand{\Enc}{\mathcmd{\mathsf{Enc}}} % encryption algorithm
\newcommand{\Encpk}{\mathcmd{\mathsf{Enc}_{\pk}}}
\newcommand{\Dec}{\mathcmd{\mathsf{Dec}}} % decryption algorithm
\newcommand{\la}{\leftarrow}
\newcommand{\negl}{\mathcmd{\mathsf{negl}}} % negligable function
\newcommand{\PubK}{\mathcmd{\mathsf{PubK}}}
\newcommand{\CPA}{\mathcmd{\mathsf{CPA}}} % cpa
\newcommand{\adv}{\mathcmd{\mathcal{A}}}
\newcommand{\GenModulus}{\mathcmd{\mathsf{GenModulus}}}
\newcommand{\A}{\mathcmd{\mathtt{A}}}
\newcommand{\B}{\mathcmd{\mathtt{B}}}
\newcommand{\C}{\mathcmd{\mathtt{C}}}
\newcommand{\D}{\mathcmd{\mathtt{D}}}
\newcommand{\cvec}{\mathcmd{\mathbf{c}}}
\newcommand{\fvec}{\mathcmd{\mathbf{f}}}
\newcommand{\chisq}{\mathcmd{\chi^2}}
\newcommand{\chisqfc}{\mathcmd{\chisq(\fvec,\cvec)}}
\newcommand{\N}{\mathcmd{N}}
\newcommand{\PKE}{\mathcmd{\mathsf{PKE}}} % generic PKE name
\newcommand{\msgsp}{\mathcmd{\mathcal{M}}} % message space
\newcommand{\prot}{\Pi}
\newcommand{\func}{g}
\newcommand{\funcc}{\func_{c}}
\newcommand{\funcf}{\func_{f}}
\newcommand{\view}{\mathrm{view}}
\newcommand{\out}{\mathrm{output}}
\newcommand{\secp}{k}
\newcommand{\compind}{\cong}
\newcommand{\compactpar}[1]{\smallskip\noindent\textbf{#1}{\ }}
\newcommand{\remove}[1]{}
\newcommand{\Alice}{\text{Carol}\xspace}
\newcommand{\Bob}{\text{Felix}\xspace}
\newcommand{\carol}{\Alice}
\newcommand{\felix}{\Bob}
\newcommand{\simu}{\mathcal{S}}
\newcommand{\simup}[1]{\simu_{#1}}
\newcommand{\bits}{\{0,1\}}
\begin{document}
\title{Secure Two-Party Feature Selection}

\author{Vanishree Rao\inst{1}\and 
Yunhui Long\inst{2}\and 
Hoda Eldardiry\inst{2}\and\\
Shantanu Rane\inst{2}\and 
Ryan A. Rossi\inst{3}\and 
Frank Torres\inst{2}}
\authorrunning{V. Rao et al.}

\institute{ Intertrust Technologies Corporation, Sunnyvale, CA 94085, USA; contribution to this work was done while this author was at PARC\\
\and
Palo Alto Research Center (PARC), Palo Alto CA 94304, USA\\
\and  
Adobe Research, San Jose CA 95110, USA\\
}

\maketitle             
\begin{abstract}
In this work, we study how to securely evaluate the value of trading data without requiring a trusted third party.
We focus on the important machine learning task of classification. 
This leads us to propose a provably secure four-round protocol that computes the value of the data to be traded without revealing the data to the potential acquirer.
The theoretical results demonstrate a number of important properties of the proposed protocol.
In particular, we prove the security of the proposed protocol in the honest-but-curious adversary model.

\keywords{Secure two party feature selection \and 
Feature selection \and 
Classification \and 
Privacy preserving data mining \and 
Homomorphic encryption.
}
\end{abstract}

% !TEX root = main.tex

\section{Introduction}

According to the report ``Data Never Sleeps 6.0'' published recently by Domo Inc., an estimated 1.7 MB of data will be created every second for each person on earth by 2020. The owners of this staggering amount of data sometimes provide it readily to others, but often hold back despite the value that data trading could provide. Both privacy concerns and the desire to monetize data at a fair market value are barriers, as both could be compromised if data are revealed before terms have been negotiated. A method to assess the value of a data trade without first revealing the data would help make data trading a more efficient transaction, whether the aim is to trade at a fair market price, apply some type of differential privacy, or both.

\paragraph{Finding business value in `distributed' data:} When data on different aspects of a system are captured by different stakeholders, trading the data  can provide a more complete perspective of the system. For instance, in an Internet-of-Things (IoT) ecosystem, IoT devices owned by different parties (manufacturers, service providers, consumers, etc.) often collect data that reveal only a partial understanding of behaviors and events. Creating a marketplace for trading the data would enable a party to get a more complete understanding when required, without spending extra time and money deploying additional IoT devices to collect data that another party already has. As long as stakeholders can establish a fair price for the data, inefficient duplication of efforts can be avoided, benefiting both parties of a transaction. However, identifying trade partners and tagging a cash value to the data can be a tricky challenge, particularly because the value depends on the quality and content of the data held by both partners.

\paragraph{Maximizing data utility while protecting individual privacy:} When considering how to share sensitive datasets, potential collaborators may seek to analyze how different statistical privacy options affect the utility of data.  The party applying statistical privacy to their data before sharing may like to work with a potential collaborator to experiment with different choices of statistical privacy methods and parameters, in order to deliver desensitized data of the highest possible utility. Applications include both business-to-business transactions and business-to-government transactions.

\paragraph{Data trading scenarios:} An owner of a dataset may want to release only subsets of their data to control proliferation, but they need a way to determine utility of subsets in order to choose the right one for each potential collaborator.  An owner may also want to limit the number of times data are shared, either to mitigate security and privacy risks or to maintain a desired monetary price for access to the data.  Choosing customers that have the highest utility for the data will help maximize monetary return, as those customers will in principle pay a higher price.  An owner may want to sell access to data at a full value-based price, but rational purchasers may insist on a discounted price to compensate for any risk associated with uncertain utility.  
Thus, answering the following question is important:

\vspace*{\fill} 
\begin{quote} 
\centering 
How can one securely measure utility of data and the impact of applying statistical privacy enhancement techniques, without access to the actual data?
\end{quote}
\vspace*{\fill}

\subsection{This Work}

In this work, we try to answer the above question for a specific potential acquirer's task, where the parties freely share data dictionaries. Specifically, we provide a protocol with which a potential provider and a potential acquirer can determine the value of the data with respect to the latter's task at hand, without the latter learning anything more about the data, other than its specification in the data dictionary.
The specific sub-case we consider is the provider having a binary feature vector and the acquirer having a binary class vector. The acquirer would like to learn if the provider's feature vector can improve the correctness of the acquirer's classification. Thus, the utility we consider is whether the data shared by the provider is expected to improve the classification of the  acquirer's existing dataset.  To quantify utility, we use the \chisq-statistic studied by Yang and Pederson (1997) for the related problem of feature selection. We employ Pallier homomorphic encryption for the required privacy-preserving computations.

\subsection{Roadmap}

The protocols in this paper assume parties share primary keys for their data, in order for data elements to be aligned. In future work, we will integrate private set intersection protocols, such as the Practical Private Set Intersection Protocols published by De Cristofaro and Tsudik~\cite{de2010practical}, in order to relax this assumption. We also plan to study extensions of the work to more sophisticated feature selection, based on combining multiple columns in the provider's dataset to generate more complex feature candidates.  

% !TEX root = main.tex

\section{Background} \label{sec:background}

In this work, we consider a structured dataset, and we are interested in  classification based on all the features available. Specifically, we consider two parties, \carol and \felix. \carol has  a dataset consisting of certain feature columns and a class vector generated from her available features. \felix possesses an additional feature column $\mathbf{f}$ that might be useful for \carol in improving the classification of her dataset. 

\compactpar{Notations.}
Let $\mathbf{c} = (c_1, \mathrm{c_2}, \dots, c_n)$ be the class label vector with \carol, and $\mathbf{f} = (\mathrm{f_1}, \mathrm{f_2}, \dots, \mathrm{f_n})$ be the feature vector with \felix. We assume both the class labels and the features are binary attributes, leaving generalization to multinomial classifiers for a future paper. That is, for all $1 \le i \le n$, $\mathrm{c_i} \in \{0,1\}$ and $\mathrm{f_i} \in \{0,1\}$. Let $\mathrm{c_i}$ denote the class variable of the $i$-th record in \carol's dataset. Let  $\mathrm{f_i}$ be the feature value, in \felix's feature vector, corresponding to the $i$-th record in \carol's dataset. \remove{That is, $\mathbf{c}$ and $\mathbf{f}$ can be viewed as resulting from vertically partitioning a database with $n$ rows, where the $i^{th}$ row has attributes $(\mathrm{f_i}, \mathrm{c_i})$ for all $1 \le i \le  n$.}

\subsection{\chisq Feature Selection}
Feature selection is the process of removing non-informative features and selecting a subset of features that are useful to build a good predictor~\cite{guyon2003introduction}. The criteria for feature selection vary among applications. For example, Pearson correlation coefficients are often used to detect dependencies in linear regressions, and mutual information and \chisq statistics are commonly used to rank discrete or nominal features~\cite{guyon2003introduction,yang1997comparative}. 

In this paper, we focus on determining utility of binary features. We choose \chisq statistics as a measure of utility, due to its wide applicability and its amenability towards cryptographic tools. More specifically, unlike mutual information which involves logarithmic computations, the calculation of \chisq statistics only involves additions and multiplications.  

For the class label vector $\mathbf{c}$ and the corresponding feature vector $\mathbf{f}$, \A is defined to be the number of rows with $\mathrm{f_i} = 0$ and $\mathrm{c_i} = 0$. \B is defined to be the number of rows with $\mathrm{f_i} = 0$ and $\mathrm{c_i} = 1$. \C is defined to be the number of rows with $\mathrm{f_i} = 1$ and $\mathrm{c_i} = 0$. \D is defined to be the number of rows with $\mathrm{f_i} = 1$ and $\mathrm{c_i} = 1$.  Table~\ref{table:ctable} shows the two-way contingency table for \fvec and \cvec. The \chisq statistic of \fvec and \cvec is defined~\cite{yang1997comparative} to be:
\begin{equation*}
\chisq(\fvec, \cvec) = \frac{n(\A\D - \B\C)^2}{(\A+\C)(\A+\B)(\C+\D)(\B+\D)}.
\end{equation*}

\begin{table}[h]
\caption{Two-Way Contingency Table of \fvec and \cvec}
\label{table:ctable}
\centering
\begin{tabularx}{.5\textwidth}{| l | X | X |}
\hline
\diagbox[width=.2\textwidth]{\fvec}{\cvec} & 0 & 1 \\
\hline
0 & \A & \B  \\
\hline
1 &  \C & \D \\
\hline
\end{tabularx}
\end{table}

\noindent
$\chisq(\fvec, \cvec)$ is used to test the independence of \fvec and \cvec. Table~\ref{table:chi-table} shows the confidence of rejecting the independence hypothesis under different \chisq values. For example, when $\chisq(\fvec, \cvec)$ is larger than $10.83$, the independence hypothesis can be rejected with more than 99.9\% confidence, indicating that the feature vector \fvec is very likely to be correlated with the class label vector \cvec. 

\begin{table}[h]
\caption{Confidence of Rejecting the Hypothesis of Independence under Different \chisq Values}
\label{table:chi-table}
\centering
\begin{tabularx}{0.4\textwidth}{ c X  X }
\toprule
%\hline
& \chisq(\fvec, \cvec) & \textbf{Confidence} \\
\midrule
%\hline
& 10.83 & 99.9\%  \\
%\hline
& 7.88  & 99.5\% \\
%\hline
& 6.63 & 99\% \\
%\hline
& 3.84 & 95\% \\
%\hline
& 2.71 & 90\% \\
\bottomrule
\end{tabularx}
\end{table}

\subsection{Cryptographic Tools} \label{subsec:crypto_tools}
\subsubsection{PKE scheme and CPA security.}
We recall the standard definitions of public-key encryption (PKE) schemes and chosen plaintext attack (CPA) security, which are used in this paper. 

\compactpar{PKE schemes.} A  scheme \PKE with
message space \msgsp consists of three probabilistically-polynomial-time (PPT)  algorithms \(\Gen,\Enc,\Dec\).
Key generation algorithm \(\Gen(1^k)\) outputs a public key \pk and a secret key \sk.
Encryption algorithm \(\Enc(\pk,\msg)\) takes \pk and a message
\(\msg\in\msgsp\), and outputs a ciphertext \ctxt.  Decryption algorithm
\(\Dec(\sk,\ctxt)\) takes \sk and a ciphertext \ctxt, and outputs
a message \msg. For correctness, we require that \(\Dec(\sk,\ctxt)=\msg\) for all
\(\msg\in\msgsp\), all \((\pk,\sk)\gets\Gen(1^k)\), and all
\(\ctxt\gets\Enc(\pk,\msg)\).

\remove{\begin{definition}[PKE Scheme~\cite{katz2014introduction}.]
A \emph{public-key encryption (PKE) scheme} is a triple of probabilistic polynomial-time algorithms (\Gen, \Enc, \Dec) such that: 
\begin{compactitem}
\item The \emph{key-generation algorithm} $\Gen(1^k)$ outputs a pair of keys $(\pk,\sk)$. 
\item The \emph{encryption algorithm} $\Enc_{\pk}(m)$ takes as input a public key \pk and a message $m$ from some message space, and outputs ciphertext $c$. 
\item The \emph{deterministic decryption algorithm} $\Dec_{\sk}(c)$ takes as input a private key \sk and a ciphertext $c$, and outputs a message $m$ or a special symbol $\bot$ denoting failure.
\end{compactitem}
It is required that, except possibly with negligible probability over $(\pk, \sk)$ output by $\Gen(1^k)$, we have $\Dec_{\sk}\left(\Enc_{\pk}(m)\right) = m$ for any (legal) message $m$.
\end{definition}
}

\compactpar{Negligible Function.}
A function $f : \mathbb{N} \rightarrow \mathbb{R}$ is \emph{negligible} if for every possible integer $c$, there exists an integer $N$ such that for all $x > N$, 
	$| f(x) | \le \frac{1}{x^c}$.
We denote negligible functions as $\negl(\cdot)$.

\compactpar{The CPA Experiment.}
We now describe the chosen-plaintext attack (CPA) game  with an adversary $\adv$ against a PKE scheme $\PKE$.

\begin{algorithm} 
\caption{The $\PubK^{\CPA}_{\adv, \PKE}$ Experiment}
\label{algo:pubk}
\begin{algorithmic}[1]
	\REQUIRE	Security parameter $k$
    \STATE $(\pk,\sk) \gets \Gen(1^k)$
    \STATE The adversary \adv is given $1^k$, $\pk$, and oracle access to $\Enc_{\pk}(\cdot)$. \adv outputs a pair of messages $(m_0, m_1)$ of the same length
    \STATE A uniform bit $b \in \bits$ is chosen, and $c \gets \Enc_{\pk}(m_b)$ is given to \adv
    \STATE \adv continues to have access to $\Enc_{\pk}(\cdot)$, and outputs a bit $b'$
    \ENSURE $1$ if $b' = b$, and $0$ otherwise
\end{algorithmic}
\end{algorithm}

\compactpar{CPA Security~\cite{katz2014introduction}.}
A PKE scheme $\PKE = (\Gen, \Enc, \Dec)$ has indistinguishable encryptions under a \emph{chosen-plaintext attack}, or is \emph{CPA-secure}, if for all probabilistic polynomial-time adversaries \adv there is a negligible function \negl such that
\begin{equation*}
\Pr\left[ \PubK^{\CPA}_{\adv, \PKE} \left( k \right) = 1 \right] \le \frac{1}{2} + \negl(k),
\end{equation*}
where the experiment $\PubK^{\CPA}_{\adv, \PKE}$ is defined in Algorithm~\ref{algo:pubk}, and the probability is taken over the randomness of \adv and of the experiment.

\subsubsection{Paillier Encryption.}
We use Paillier encryption to maintain privacy in our two-party feature selection algorithm, and employ the additive homomorphic property of Paillier encryption to calculate the \chisq statistics that quantify feature utility. We recall the Paillier encryption scheme in Figure~\ref{fig:paillier}~\cite{katz2014introduction}.

Note that while we use Paillier homomorphic encryption, the proposed protocols can accomodate any semantically secure additively homomorphic encryption scheme.

\begin{figure*}[h!]
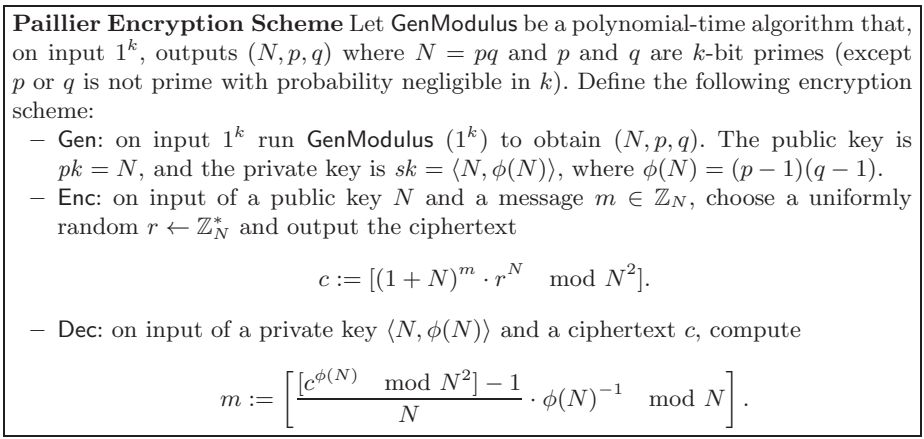

\begin{boxedminipage}{\textwidth}
\textbf{Paillier Encryption Scheme}
Let \GenModulus be a polynomial-time algorithm that, on input $1^k$, outputs $(N, p, q)$ where $N = pq$ and $p$ and $q$ are $k$-bit primes (except $p$ or $q$  is not prime  with probability negligible in $k$). Define the following encryption scheme:
\begin{compactitem}
\item \Gen: on input $1^k$ run \GenModulus($1^k$) to obtain $(N, p, q)$. The public key is $\pk = N$, and the private key is $\sk = \langle N, \phi(N) \rangle$, where $\phi(N) = (p-1)(q-1)$.
\item \Enc: on input of a public key $N$ and a message $m \in \mathbb{Z}_N$, choose a uniformly random $r \gets \mathbb{Z}^*_N$ and output the ciphertext
\begin{displaymath}
c := [(1+N)^m \cdot r^N\mod N^2].
\end{displaymath}
\item \Dec: on input of a private key $\langle N, \phi(N) \rangle$ and a ciphertext $c$, compute 
\begin{displaymath}
m := \left[\frac{[c^{\phi(N)}\mod N^2] -1}{N} \cdot \phi(N)^{-1}\mod N\right].
\end{displaymath}
\end{compactitem}
\end{boxedminipage}
  \caption{\label{fig:paillier} Paillier Encryption Scheme.}
\end{figure*}

Paillier encryption supports additive and scalar multiplication homomorphism. 
We briefly recall the definitions of additive homomorphism and scalar multiplication homomorphism~\cite{katz2014introduction}.

\compactpar{Additive Homomorphism.} A PKE scheme \PKE = (\Gen, \Enc, \Dec) is said to be \emph{additively homomorphic}, if there exists a binary operation $\oplus$, such that the following holds for all $k \in \N$, and for all $m_1, m_2 \in \msgsp$,
\begin{equation*}
\begin{aligned}
\Pr \left[ m^* = m_1 + m_2 \left\vert
\begin{array}{ll}
(\pk, \sk) \gets \Gen(1^k) \\
\mathrm{c_1} \gets \Encpk(m_1), \mathrm{c_2} \gets \Encpk(m_2) \\
c^* \gets \mathrm{c_1} \oplus \mathrm{c_2} \\
m^* \gets \Dec_{\sk}(c^*) 
\end{array} 
\right.
\right] = 1 - \negl(k).
\end{aligned}
\end{equation*}

\compactpar{Scalar Multiplication Homomorphism.} A PKE scheme \PKE = (\Gen, \Enc, \Dec) is said to be \emph{scalar multiplication homomorphic}, if there exists a binary operation $\otimes$, such that the following holds for all $k \in \N$, and for all $m_1, m_2 \in \msgsp$, 
\begin{equation*}
\begin{aligned}
\Pr \left[ m^* = m_1 m_2 \left\vert
\begin{array}{ll}
(\pk, \sk) \gets \Gen(1^k) \\
c \gets \Encpk(m_2) \\
c^* \gets m_1 \otimes c \\
m^* \gets \Dec_{\sk}(c^*) 
\end{array} 
\right.
\right] = 1 - \negl(k).
\end{aligned}
\end{equation*}

% !TEX root = main.tex

\section{Proof of Privacy}

We first present  the high-level argument for how our protocols will protect each party's data. We have one of the parties (\carol) choose the encryption key, and encrypt  her data using this key before sending it to the other party (\felix). Thus, \carol's privacy will be guaranteed by the semantic security assumption of the encryption scheme. Meanwhile, \felix will also encrypt his data using \carol's key, but he will blind all of the outputs he sends to \carol with randomness of his choosing, ensuring that \carol can learn nothing about his data.
We now make these notions precise by first providing a formal definition of privacy protection in the honest-but-curious adversary model, and a formal proof of privacy for the  protocol that attempts to protect privacy in the above described manner. 

\begin{definition}[Honest-but-curious security of two-party protocol]
\label{def:secdef}
We begin with the following notation:
\begin{itemize}
\item Let $\funcc$ and $\funcf$ be  probabilistic polynomial-time functionalities and let $\prot$ be a two-party protocol for computing $\func = (\funcc, \funcf)$. Let the parties be $\carol, \felix$, with inputs \cvec,\fvec respectively. 
\item The $\view$ of the  party $A \in \{\carol, \felix\}$ during an execution of $\prot$ on $(\cvec,\fvec)$ and security parameter $\secp$ is denoted by $\view^{\prot}_{A}(\cvec,\fvec,\secp)$ and equals $(w, r^A, m^A_1,\ldots,m^A_t)$, where $w \in \{\cvec, \fvec\}$  ($w$'s value depending on the value of $A$), $r^A$ equals the contents of the  party $A$'s internal random tape, and $ m^A_j$ represents the $j$-th message that it received.
\item The output of the  party $A$ during an execution of $\prot$ on $(\cvec,\fvec)$ and security parameter $\secp$ is denoted by  $\out^{\prot}_{A}(\cvec,\fvec,\secp)$  and can be computed from its own view of the execution. 
\end{itemize}

Let  $\func = (\funcc, \funcf)$ be a functionality. We say that $\prot$ securely computes $\func$ in the presence of semi-honest adversaries if there exist probabilistic polynomial-time algorithms $\simup{c}$ and $\simup{f}$ such that
\begin{eqnarray}
& \{(\simup{c}(1^\secp, \mathbf{c}, \funcc(\mathbf{c},\mathbf{f})), \func(\mathbf{c},\mathbf{f}))\}_{\mathbf{c},\mathbf{f},\secp}  \compind   \{(\view^{\prot}_{\carol}(\mathbf{c},\mathbf{f},\secp), \out^{\prot}(\mathbf{c},\mathbf{f},\secp))\}_{\mathbf{c},\mathbf{f},\secp}  \label{eqn:simuone} \\
& \{(\simup{f}(1^\secp, \mathbf{f}, \funcf(\mathbf{c},\mathbf{f})), \func(\mathbf{c},\mathbf{f}))\}_{\mathbf{c},\mathbf{f},\secp} \compind   \{(\view^{\prot}_{\felix}(\mathbf{c},\mathbf{f},\secp), \out^{\prot}(\mathbf{c},\mathbf{f},\secp))\}_{\mathbf{c},\mathbf{f},\secp} \label{eqn:simutwo}
\end{eqnarray}
$\mathbf{c},\mathbf{f }\in \bits^*$ such that $|\mathbf{c}| = |\mathbf{f}|$, and $\secp \in \mathbb{N}$.

%  \label{eqn:simuone}  \label{eqn:simutwo}

\remove{
Suppose that protocol $\prot$ has \carol compute (and output) the function $\funcc(c,f)$, and has \felix compute (and output) $\funcf(c,f)$, where $c,f$ denotes the inputs for \carol and \felix (respectively). Let VIEWA(x,y) (resp. VIEWB(x,y)) represent \carol's (resp. \felix's) view of the transcript. In other words, if (x,rA) (resp. (y,rB)) denotes \carol's (resp. \felix's) input and randomness, then:
where the {mi} denote the messages passed between the parties. Also let OA(x,y) and OB(x,y) denote \carol's (resp. \felix's) output. Then we say that protocol X protects privacy (or is secure) against an honest-but-curious adversary if there exist probabilistic polynomial time simulators \simup{c} and \simup{f} such that:
}

\end{definition}

% !TEX root = main.tex

\section{Protocol}
\label{sec:protocol}
In this section, we describe a \emph{four}-round protocol for \chisq statistic calculation under a two-party setting. For convenience, we continue to refer to the  parties as  \emph{\carol}, who has the class vector \cvec,  and \emph{\felix}, who has the feature vector \fvec. \carol's objective is to learn \chisqfc and \felix's objective is to not reveal any further information about \fvec while Carol computes the utility of Felix's data for her classifier. In this section, Felix uses multiplicative binding to keep the detailed mathematics a little simpler, but an alternative protocol that uses additive blinding is provided in Section 6 for situations where the security of multiplicative blinding is a concern.

As before, \A is the number of rows with $\mathrm{f_i} = 0$ and $\mathrm{c_i} = 0$. \B is the number of rows with $\mathrm{f_i} = 0$ and $\mathrm{c_i} = 1$. \C is the number of rows with $\mathrm{f_i} = 1$ and $\mathrm{c_i} = 0$. \D is the number of rows with $\mathrm{f_i} = 1$ and $\mathrm{c_i} = 1$.

\medskip
\noindent{\bf{Round 1}.} \\
\noindent \carol performs the following operations:
\begin{compactenum}
\item Generate a Paillier key pair $(\pk, \sk) = \Gen(1^k)$.
\item Encrypt all class labels with $\pk$: $\Encpk(\mathrm{c_1}),  \Encpk(\mathrm{c_2}),\ldots, \Encpk(\mathrm{c_n})$.
\item Compute $\frac{\B+\D}{\A+\C}$. Note that \carol can obtain this value by computing  $\frac{\sum_{i=1}^{n}\mathrm{c_i}}{n - (\sum_{i=1}^{n}\mathrm{c_i})}$, since $\B+\D = \sum_{i=1}^{n}\mathrm{c_i}$ and $\A+\C = n - (\B+\D)$, based on the contingency table.
\item Encrypt $\frac{\B+\D}{\A+\C}$ with $\pk$: $\Encpk\left(\frac{\B+\D}{\A+\C}\right)$.
\item Send the following  values to \felix: 
\begin{equation*}
\left(\pk, \Encpk(\mathrm{c_1}), \Encpk(\mathrm{c_2}), \dots, \Encpk(\mathrm{c_n}), \Encpk\left(\frac{\B+\D}{\A+\C}\right)\right). 
\end{equation*}
\end{compactenum}

\noindent{\bf{Round 2}.} \\
\noindent \felix performs the following operations:
\begin{compactenum}
\item Compute $\Encpk(\D)$. Note that Felix can obtain this value by computing $\Oplus_{i=1}^{n} \left( \mathrm{f_i} \otimes \Encpk(\mathrm{c_i}) \right) =   \Oplus_{i=1}^{n}  \Encpk(\mathrm{f_i}\mathrm{c_i})$ = $ \Encpk(\sum_{i=1}^{n}\mathrm{f_i}\mathrm{c_i})$, 

since $\sum_{i=1}^{n}\mathrm{f_i}\mathrm{c_i} = \D$.
\item  Sample  $r \la \mathbb{Z}_N$, and compute 
$r\otimes\Encpk(\D) = \Encpk(r\D)$.
\item Send the following  value to \carol:
\begin{equation*}
\Encpk(r\D)
 \end{equation*}
\end{compactenum}

\noindent{\bf{Round 3}.} \\
\noindent \carol performs the following operations:
\begin{compactenum}
\item Decrypt $\Encpk(r\D)$ using $\sk$.
\item Compute $\frac{r^2 \D^2}{(\B+\D)(\A+\C)}$ and $\frac{r\D}{\A+\C}$, and encrypt them.
\item Send the following values to \felix:
\begin{equation*}
\left(\Encpk\left(\frac{r^2 \D^2}{(\B+\D)(\A+\C)}\right), \Encpk\left(\frac{r\D}{\A+\C}\right)\right).
 \end{equation*}
\end{compactenum}

\noindent{\bf{Round 4}.} \\
\noindent \felix performs the following operations:
\begin{compactenum}

\item Cancel $r$ by computing
\begin{equation*}
{r^{-2}} \otimes \Encpk\left(\frac{r^2 \D^2}{(\B+\D)(\A+\C)}\right) = \Encpk\left(\frac{\D^2}{(\B+\D)(\A+\C)}\right) 
\end{equation*}
and
\begin{equation*}
{r^{-1}} \otimes \Encpk\left(\frac{r\D}{\A+\C}\right) = \Encpk\left(\frac{\D}{\A+\C}\right).
\end{equation*}
\item Compute an encryption of $\chisqfc$ by computing:
\begin{equation*}
\begin{aligned}
& \left( \frac{n^3}{(\A+\B)(\C+\D)} \otimes \Encpk \left( \frac{\D^2}{(\B+\D)(\A+\C)} \right)  \right)\\
&\oplus \left(\frac{n(\C+\D)}{\A+\B} \otimes \Encpk\left(\frac{\B+\D}{\A+\C}\right)\right)
\oplus \left(\frac{-2n^2}{\A+\B} \otimes \Encpk \left(\frac{\D}{\A+\C}\right) \right),
\end{aligned}
\end{equation*}
where $\C+\D $ and $\A+\B $ are computed as
\begin{equation*}
\C+\D = \sum_{i=1}^{n} \mathrm{f_i},
\end{equation*}
and
\begin{equation*}
\A+\B = n - (\C+\D).
\end{equation*}
We see below that the above computation gives $\Encpk(\chisqfc)$. 
Since $\A\D -\B\C = (\A+\B+\C+\D)\D - (\B+\D)(\C+\D)$, \chisqfc can be decomposed as follows:
\begin{equation*}
\begin{aligned}
\chisqfc &= \frac{n(\A\D-\B\C)^2}{(\A+\C)(\A+\B)(\C+\D)(\B+\D)} \\
&= \frac{n^3}{(\A+\B)(\C+\D)} \frac{\D^2}{(\B+\D)(\A+\C)} + \frac{n(\C+\D)}{(\A+\B)} \frac{(\B+\D)}{(\A+\C)} - \frac{2n^2}{(\A+\B)} \frac{\D}{(\A+\C)}. \\
\end{aligned}
\end{equation*}
\item Send the following value  to \carol:
\begin{equation*}
\Encpk(\chisqfc).
 \end{equation*}
\end{compactenum}

\noindent{\bf{Local computation}.} \\
\carol decrypts $\Encpk(\chisqfc)$ to obtain \chisqfc.

\remark{We note that only \carol receives the value \chisqfc. Depending on the application, if \felix also needs to know the value of \chisqfc, \carol can simply then send  it to \felix after running the protocol.
}
\remark{If Felix needs to know the value of \chisqfc but does not trust Carol to send the true value, then the parties can use a two-stream protocol wherein both parties compute and send encrypted values in round one and both parties send encrypted values of \chisqfc in round four.  Since the computation for \chisqfc is symmetric with respect to c and f, both parties should end up with the same value of \chisqfc, assuming they used the same data in both streams (i.e., did not cheat).  To verify that the parties did not cheat, they can re-encrypt their \chisqfc values with a new, single-use key, send their re-encrypted \chisqfc to the other party, and then send the one-use key after receiving the re-encrypted \chisqfc message from the other party.  If cheating occurred, the decrypted value of the other party's \chisqfc will not match their own.   
}

% !TEX root = main.tex
\section{Proof of Security}

With respect to the notion of security specified in Definition~\ref{def:secdef}, we first prove the following key lemma that will allow us to argue that our two-party protocol is secure against an honest-but-curious adversary. Specifically, the lemma captures the crux of proof, and its extension to the main theorem is straightforward.
\begin{lemma}
\label{lemma:core}
Suppose that in a two-party protocol $\prot'$, \carol runs the key generation algorithm of a CPA-secure homomorphic public-key encryption scheme and gives the public key to \felix. Also, suppose that all messages sent from \carol to \felix are encrypted with the generated public key, and all messages sent from \felix to \carol are either encryptions of elements randomly distributed in the plaintext space  and  independent of \felix's inputs, or encryptions of the final output. Then, the protocol $\prot$ is secure in the honest-but-curious adversary model.
\end{lemma}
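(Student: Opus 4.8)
By Definition~\ref{def:secdef} it suffices to exhibit probabilistic polynomial-time simulators $\simup{c}$ and $\simup{f}$ satisfying \eqref{eqn:simuone} and \eqref{eqn:simutwo}. Observe first that in the setting of the lemma \felix obtains no output, and \carol's output is recovered by decrypting \felix's last message, which is $\Encpk(\funcc(\cvec,\fvec))$; correctness of the (homomorphic) scheme therefore gives $\out^{\prot'}(\cvec,\fvec,\secp)=\func(\cvec,\fvec)$ except with negligible probability, and $\func$ is a deterministic function of the inputs. Hence each of \eqref{eqn:simuone}, \eqref{eqn:simutwo} reduces to showing that the simulated view is indistinguishable from the corresponding real view (on \carol's side, with the extra requirement that her output be correctly recoverable from the simulated view). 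I would prove the \felix side from CPA-security of the scheme, and the \carol side by an exact, information-theoretic argument.

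\textbf{Simulating \felix.}
Let $\simup{f}(1^\secp,\fvec,\cdot)$ run $\Gen(1^\secp)$, keep the public key $\pk$ (discarding $\sk$), sample a uniform random tape $r^{\felix}$, and then produce \carol's transcript message by message, replacing every ciphertext component that \carol would send by a fresh encryption under $\pk$ of the fixed plaintext $0$ (of matching length), and including $\pk$ in the clear where \carol's first message carries it; it outputs $(\fvec,r^{\felix},\widetilde m_1,\dots,\widetilde m_t)$. To establish \eqref{eqn:simutwo}, use hybrids $H_0,\dots,H_q$ over the $q=\mathrm{poly}(\secp)$ ciphertexts \carol sends, where in $H_j$ the first $j$ are the genuine ciphertexts of a real execution on $(\cvec,\fvec)$ and the remaining $q-j$ are encryptions of $0$; thus $H_q$ is $\view^{\prot'}_{\felix}(\cvec,\fvec,\secp)$ and $H_0$ is the output of $\simup{f}$. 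A distinguisher between $H_{j-1}$ and $H_j$ becomes a CPA adversary: given $\pk$, it samples both parties' random tapes, runs the protocol internally to learn the plaintext $\mu_j$ that the $j$-th ciphertext should carry, submits $(\mu_j,0)$ to its challenger, plants the response in slot $j$, fills the other slots with genuine encryptions (for indices $<j$) and encryptions of $0$ (for indices $>j$) using its encryption oracle, and feeds the resulting view to the distinguisher. A union bound over the $q$ steps bounds the total advantage by $q\cdot\negl(\secp)=\negl(\secp)$, giving \eqref{eqn:simutwo}.

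\textbf{Simulating \carol.}
Let $\simup{c}(1^\secp,\cvec,\funcc(\cvec,\fvec))$ run $\Gen(1^\secp)$ with fresh randomness $r^{\carol}$, obtaining $(\pk,\sk)$, and build \felix's transcript using the public structure of $\prot'$: for each message that $\prot'$ designates as a blinded/random message, sample a uniformly random plaintext and encrypt it freshly under $\pk$; for the message carrying the final output, encrypt the supplied value $\funcc(\cvec,\fvec)$ freshly under $\pk$. It outputs $(\cvec,r^{\carol},\widetilde m_1,\dots,\widetilde m_{t'})$. This is an exact simulation: in a real execution \carol's random tape has the same distribution, and by hypothesis every message \felix sends is a fresh encryption under $\pk$ either of a plaintext uniform in the plaintext space (independently of \fvec) or of $\funcc(\cvec,\fvec)$ — so, fresh encryption randomness being identically distributed in both worlds, the simulated view is distributed identically to $\view^{\prot'}_{\carol}(\cvec,\fvec,\secp)$. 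Decrypting the last simulated message with $\sk$ yields exactly $\funcc(\cvec,\fvec)$, so \carol's output recovered from the simulated view equals $\out^{\prot'}(\cvec,\fvec,\secp)$, and \eqref{eqn:simuone} follows.

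\textbf{Main obstacle and extension.}
The only non-routine point is the hybrid reduction for $\simup{f}$: CPA-security must be shown to suffice despite \carol sending polynomially many ciphertexts whose plaintexts are correlated with \felix's random tape and with earlier traffic. The first two issues are absorbed by the ciphertext-by-ciphertext hybrid; the crucial point — and the only place the hypothesis on \felix's messages is used — is that the reduction can reconstruct every $\mu_j$ \emph{without} $\sk$, because whenever \carol decrypts a message from \felix the result is, by hypothesis, either a uniform plaintext (resampled by the reduction) or $\funcc(\cvec,\fvec)$ (hardcoded into it). Everything else — efficiency of the simulators, the statistical step on \carol's side, and the appeal to correctness of the homomorphic operations (treating each forwarded ciphertext as freshly re-randomized) — is bookkeeping. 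The main theorem then follows by checking that the four-round protocol of Section~\ref{sec:protocol} meets the hypotheses: \carol instantiates a CPA-secure additively/scalar-multiplicatively homomorphic scheme (Paillier), every \carol$\to$\felix message is a ciphertext under $\pk$, \felix's Round~2 message is an encryption of the multiplicatively blinded count $r\D$ (the first case of the lemma), and \felix's Round~4 message is $\Encpk(\chisqfc)$ (the second case).
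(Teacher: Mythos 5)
Your proposal is correct and follows essentially the same route as the paper's proof: a two-case simulation argument in which \felix's view is handled by replacing \carol's ciphertexts with encryptions of fixed/dummy plaintexts and reducing any distinguisher to a CPA adversary, and \carol's view is handled by a perfect simulation using the hypothesis that \felix's messages encrypt either uniform input-independent plaintexts or the final output. Your version is simply more explicit than the paper's (a ciphertext-by-ciphertext hybrid, the observation that the reduction never needs $\sk$, and the treatment of the joint output component of Equations~(\ref{eqn:simuone})--(\ref{eqn:simutwo})), all of which the paper leaves implicit.
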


\begin{proof}
To prove the security of the protocol, we need to consider two cases -- one, where \carol is corrupted, and the other, where \felix is corrupted. In each case, we will prove that the corrupted party will not learn anything more about the other party's output than the protocol output. Specifically, we show that there exist PPT algorithms $\simup{c}$ and $\simup{f}$, that simulate the non-corrupted party's messages without knowing the non-corrupted party's inputs but only knowing the output, in cases where \carol and \felix are corrupted, respectively. This corresponds to establishing equations (\ref{eqn:simuone}) and (\ref{eqn:simutwo}) in Definition~\ref{def:secdef}. 

\noindent \emph{Case 1: When \felix is corrupted by an adversary.} We show how to simulate \carol's messages sent to \felix, by describing the simulator $\simup{f}$. For every ciphertext to be sent from \carol to \felix, $\simup{f}$ chooses a random plaintext in the message space and sends an encryption of it. If \felix can tell apart the views of communicating with \carol and with the simulator, then there exists an adversary that can break CPA security of the underlying encryption scheme. Since, by assumption, no such PPT adversary exists, we have that Equation (\ref{eqn:simutwo}) holds. 

\noindent \emph{Case 2: When \carol is corrupted by an adversary.} We show how to simulate \felix's messages sent to \carol, by describing the simulator $\simup{c}$. For every ciphertext that encrypts a randomly distributed plaintext, sent by \felix to \carol, $\simup{c}$ samples a uniform random element in the plaintext space, encrypts it with \carol's public key, and sends the resulting ciphertext to \carol. For the ciphertext encrypting the final output, note that  $\simup{c}$ gets the final output as an input. Using this, the simulator can compute its encryption, and send the resulting ciphertext to \carol. Since the messages sent by $\simup{c}$ to \carol are distributed identically to \felix's messages  to \carol, we have that Equation (\ref{eqn:simuone}) holds. 
\end{proof}

We will now simply extend the core lemma into the main theorem. 

\begin{theorem}
The two-party protocol $\prot$ described in Section~\ref{sec:protocol} is secure in the honest-but-curious adversarial model.
\end{theorem}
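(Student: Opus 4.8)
\noindent\emph{Proof plan.} The plan is to show that the protocol $\prot$ of Section~\ref{sec:protocol} is an instance of the abstract protocol class covered by Lemma~\ref{lemma:core}, so that the theorem follows immediately by invoking that lemma with $\prot'=\prot$. Concretely, I would check the three structural hypotheses of Lemma~\ref{lemma:core} for $\prot$: (i) \carol runs $\Gen(1^k)$ of a CPA-secure additively homomorphic PKE and hands $\pk$ to \felix; (ii) every \carol$\to$\felix message is a ciphertext under $\pk$; and (iii) every \felix$\to$\carol message is either the encryption of a plaintext that is uniform over the message space $\mathbb{Z}_N$ and independent of \felix's input $\fvec$, or the encryption of the final output $\chisqfc$.

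Verifying (i) and (ii) is routine bookkeeping against the protocol description: Round~1, step~1 is precisely $(\pk,\sk)\gets\Gen(1^k)$ for Paillier, which is CPA-secure and additively/scalar-multiplicatively homomorphic; the \carol$\to$\felix transmissions are the Round~1 tuple $\bigl(\Encpk(\mathrm{c_1}),\dots,\Encpk(\mathrm{c_n}),\Encpk(\tfrac{\B+\D}{\A+\C})\bigr)$ together with $\pk$, and the Round~3 pair $\bigl(\Encpk(\tfrac{r^2\D^2}{(\B+\D)(\A+\C)}),\Encpk(\tfrac{r\D}{\A+\C})\bigr)$ --- all ciphertexts under $\pk$, with the lone exception of $\pk$ itself, which is already in \felix's possession and which the simulator $\simup{f}$ regenerates, exactly as in the proof of Lemma~\ref{lemma:core}. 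For (iii), \felix sends only two messages: $\Encpk(r\D)$ in Round~2 and $\Encpk(\chisqfc)$ in Round~4; the latter is (an encryption of) the final output, which \felix builds homomorphically without himself learning $\chisqfc$. So everything reduces to one claim: $r\D \bmod N$ is uniformly distributed over $\mathbb{Z}_N$ and independent of $\fvec$.

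That reduction step is where the real work lies, and it is the step I expect to be the main obstacle. In Round~2, $r\la\mathbb{Z}_N$ is uniform and sampled by \felix independently of the data, whereas $\D=\sum_{i=1}^n\mathrm{f_i}\mathrm{c_i}\in\{0,1,\dots,n\}$ depends on $(\cvec,\fvec)$. Whenever $\D$ is a unit modulo $N=pq$, the map $x\mapsto\D x$ is a bijection of $\mathbb{Z}_N$, so $r\D\bmod N$ is again uniform, hence independent of $\D$ and therefore of $\fvec$; and since $p,q$ are $k$-bit primes while $\D\le n$, the value $\D$ is indeed a unit as soon as $\D\neq 0$. The boundary case $\D=0$ is degenerate --- it makes $r\D=0$, revealing ``$\D=0$'' to \carol --- and handling it cleanly requires either arguing that $\D\neq 0$ holds with overwhelming probability under the relevant data model, or appealing to the additive-blinding variant of Section~6, which sidesteps the issue. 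Granting this, hypothesis~(iii) holds, and Lemma~\ref{lemma:core} supplies the two simulators: $\simup{f}$ (corrupted \felix) replaces each \carol ciphertext with an encryption of a fresh random plaintext, and CPA security gives Equation~(\ref{eqn:simutwo}); $\simup{c}$ (corrupted \carol) replaces $\Encpk(r\D)$ with an encryption of a fresh uniform element of $\mathbb{Z}_N$ --- identically distributed, by the claim above --- and encrypts the final output it is handed to produce $\Encpk(\chisqfc)$, giving Equation~(\ref{eqn:simuone}). Together these establish Definition~\ref{def:secdef} for $\prot$.
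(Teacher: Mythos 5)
Your proposal takes essentially the same route as the paper: it verifies the structural hypotheses of Lemma~\ref{lemma:core} for $\prot$ and then invokes the lemma to obtain the two simulators. You are in fact somewhat more careful than the paper's own brief proof, which simply asserts that \felix's Round-2 message encrypts an element randomly distributed in $\mathbb{Z}_N$ without supplying your unit-of-$\mathbb{Z}_N$ argument or flagging the degenerate case $\D=0$.
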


\begin{proof}
We note the following aspects in the  protocol $\prot$. All the messages sent from \carol to \felix are encrypted using \carol's public key under Paillier encryption scheme. The messages sent from \felix to \carol are either encryptions of elements randomly distributed in $ \mathbb{Z}_N$, the plaintext space, or encryption of the final output, $\chisqfc$. Since these aspects conform to the conditions in Lemma~\ref{lemma:core}, based on the lemma, we have that the protocol $\prot$ is  secure in the honest-but-curious adversary model.
\end{proof}

% !TEX root = main.tex

\section{Alternative Protocol}
\label{sec:altprotocol}
In this section, we describe an alternative protocol for \chisq statistic calculation under a two-party setting, wherein Felix uses additive blinding rather than multiplicative blinding to introduce the random number $r$.  In theory, taking advantage of additive rather than multiplicative homomorphism provides stronger security~\cite{bianchi2011analysis}, albeit at a cost in computational efficiency and complexity. For this alternative protocol, round one is unchanged:

\medskip
\noindent{\bf{Round 1}.} \\
\noindent \carol performs the same operations as in Round 1 of section 4, including sending the following  values to \felix: 
\begin{equation*}
\left(\pk, \Encpk(\mathrm{c_1}), \Encpk(\mathrm{c_2}), \dots, \Encpk(\mathrm{c_n}), \Encpk\left(\frac{\B+\D}{\A+\C}\right)\right). 
\end{equation*}

\noindent{\bf{Round 2}.} \\
\noindent \felix performs the following operations:
\begin{compactenum}
\item Compute $\Encpk(\D)$. Note that Felix can obtain this value by computing $\Oplus_{i=1}^{n} \left( \mathrm{f_i} \otimes \Encpk(\mathrm{c_i}) \right) =   \Oplus_{i=1}^{n}  \Encpk(\mathrm{f_i}\mathrm{c_i})$ = $ \Encpk(\sum_{i=1}^{n}\mathrm{f_i}\mathrm{c_i})$, 

since $\sum_{i=1}^{n}\mathrm{f_i}\mathrm{c_i} = \D$.
\item  Sample  $r \la \mathbb{Z}_N$, and compute 
$r\oplus\Encpk(\D) = \Encpk(r+\D)$.
\item Send the following  value to \carol:
\begin{equation*}
\Encpk(r+\D)
 \end{equation*}
\end{compactenum}

\noindent{\bf{Round 3}.} \\
\noindent \carol performs the following computations.
\begin{compactenum}
\item Decrypt $\Encpk(r+\D)$ using $\sk$.
\item Compute five values:\\  

 $\frac{(r+\D)^2 }{(\B+\D)(\A+\C)}$, $\frac{(r+\D) }{(\B+\D)(\A+\C)}$, $\frac{(r+\D) }{(\A+\C)}$, $\frac{1}{(\B+\D)(\A+\C)}$ and $\frac{1}{\A+\C}$,\\

and encrypt them, obtaining:\\
 
 $\Encpk\left(\frac{(r+\D)^2 }{(\B+\D)(\A+\C)}\right)$, $\Encpk\left(\frac{(r+\D) }{(\B+\D)(\A+\C)}\right)$, $\Encpk\left(\frac{(r+\D) }{(\A+\C)}\right)$, $\Encpk\left(\frac{1}{(\B+\D)(\A+\C)}\right)$ and $\Encpk\left(\frac{1}{\A+\C}\right)$.\\
 
\item Send the five encrypted values to \felix.\\
\end{compactenum}

\noindent{\bf{Round 4}.} \\
\noindent \felix performs the following computations.
\begin{compactenum}

\item Eliminate $r$ from the first and third encrypted values by computing
\begin{equation*}
\begin{aligned}
& \Encpk\left(\frac{(r+\D)^2 }{(\B+\D)(\A+\C)}\right)
\oplus \left( {r^{2}} \otimes \Encpk\left(\frac{1}{(\B+\D)(\A+\C)}\right) \right) \\
& \oplus \left( -2{r} \otimes \Encpk\left(\frac{(r+\D) }{(\B+\D)(\A+\C)}\right) \right)
 = \Encpk\left(\frac{\D^2}{(\B+\D)(\A+\C)}\right) 
\end{aligned}
\end{equation*}
and
\begin{equation*}
\Encpk\left(\frac{(r+\D) }{(\A+\C)}\right)
\oplus \left( {-r} \otimes \Encpk\left(\frac{1}{(\A+\C)}\right) \right) \\
 = \Encpk\left(\frac{\D}{(\A+\C)}\right).
\end{equation*}
\item Compute an encryption of $\chisqfc$ by computing:
\begin{equation*}
\begin{aligned}
& \left( \frac{n^3}{(\A+\B)(\C+\D)} \otimes \Encpk \left( \frac{\D^2}{(\B+\D)(\A+\C)} \right)  \right)\\
&\oplus \left(\frac{n(\C+\D)}{\A+\B} \otimes \Encpk\left(\frac{\B+\D}{\A+\C}\right)\right)
\oplus \left(\frac{-2n^2}{\A+\B} \otimes \Encpk \left(\frac{\D}{\A+\C}\right) \right),
\end{aligned}
\end{equation*}
where $\C+\D $ and $\A+\B $ are computed as
\begin{equation*}
\C+\D = \sum_{i=1}^{n} \mathrm{f_i},
\end{equation*}
and
\begin{equation*}
\A+\B = n - (\C+\D).
\end{equation*}
We see below that the above computation gives $\Encpk(\chisqfc)$. 
Since $\A\D -\B\C = (\A+\B+\C+\D)\D - (\B+\D)(\C+\D)$, \chisqfc can be decomposed as follows:
\begin{equation*}
\begin{aligned}
\chisqfc &= \frac{n(\A\D-\B\C)^2}{(\A+\C)(\A+\B)(\C+\D)(\B+\D)} \\
&= \frac{n^3}{(\A+\B)(\C+\D)} \frac{\D^2}{(\B+\D)(\A+\C)} + \frac{n(\C+\D)}{(\A+\B)} \frac{(\B+\D)}{(\A+\C)} - \frac{2n^2}{(\A+\B)} \frac{\D}{(\A+\C)}. \\
\end{aligned}
\end{equation*}
\item Send the following value  to \carol:
\begin{equation*}
\Encpk(\chisqfc).
 \end{equation*}
\end{compactenum}

\noindent{\bf{Local computation}.} \\
\carol decrypts $\Encpk(\chisqfc)$ to obtain \chisqfc.

\section{Related Work}

There has been extensive research on privacy-preserving data mining (PPDM), which aims at completing data mining tasks on a union of several private datasets, each owned by a different party. The goal of PPDM can be achieved by either adding noise and perturbations~\cite{agrawal2000privacy,dwork2008differential} or using cryptographic tools. This paper falls into the latter category. 

General SMPC~\cite{yao1986generate,goldreich1987play,goldreich1998secure,henecka2010tasty,malkhi2004fairplay,ben2008fairplaymp} can be used to calculate any functions between multiple parties without revealing the input of each party. However, currently-known general SMPC protocols are computationally inefficient. Therefore, it is impractical to do large-scale multi-party feature selection using these protocols. Compared to general SMPC protocols, the protocol proposed in this paper is more efficient in handling feature selection. 

Recent studies have proposed several efficient SMPC protocols to accomplish different data mining tasks such as statistics computations~\cite{du2001privacy,canetti2001selective}, set intersections~\cite{freedman2004efficient,agrawal2003information}, classification~\cite{vaidya2005privacy,kikuchi2013privacy,vaidya2004privacy,wright2004privacy}, clustering~\cite{vaidya2003privacy}, and  regression~\cite{gascon2016secure}. However, to the best of our knowledge, not much research has been done in secure multi-party feature selection. As a commonly-used pre-processing technique, feature selection can be used in conjunction with many of the previously mentioned SMPC data mining protocols or as a metric to estimate data quality for classification tasks. 

There are many feature selection methods. \cite{du2001privacy} proposes an algorithm for privacy-preserving calculation of Pearson correlation coefficients among distributed parties. However, different from our approach, they use perturbation techniques to achieve privacy protection. \cite{banerjee2011privacy} proposes a secure multi-party feature selection protocol using virtual dimensionality reduction, but their protocol requires users to exchange unencrypted intermediate results such as the dot product of two attribute vectors. Our protocol achieves a stronger privacy protection: each participating party only learns the $\chi^2$ coefficient between the two attributes, and no intermediate results are leaked. 
\section{Conclusion}
Data trading will become more and more important as devices generate more and more data. In this work, we initiate a study on how to securely evaluate the value of trading data without requiring a trusted third party, by considering the specific case of data classification tasks. We present a secure four-round protocol that computes the value of the data to be traded without revealing the data to the potential acquirer. 

We employed additive homomorphic encryption as a core building block to compute the \chisq-statistic in a privacy-preserving manner.

% !TEX root = main.tex

\section{Acknowledgement}

The authors acknowledge and express appreciation for partial funding for this work from the U.S. Department of Transportation Federal Highway Administration Exploratory Advanced Research Program, grant ID DTFH6115H00006, and the support of FHWA Program Manager Dr. Ana Maria Eigen.

\bibliographystyle{splncs04}
\bibliography{references}

\begin{thebibliography}{10}
\providecommand{\url}[1]{\texttt{#1}}
\providecommand{\urlprefix}{URL }
\providecommand{\doi}[1]{https://doi.org/#1}

\bibitem{agrawal2003information}
Agrawal, R., Evfimievski, A., Srikant, R.: Information sharing across private
  databases. In: Proceedings of the 2003 ACM SIGMOD international conference on
  Management of data. pp. 86--97. ACM (2003)

\bibitem{agrawal2000privacy}
Agrawal, R., Srikant, R.: Privacy-preserving data mining. In: ACM Sigmod
  Record. vol.~29, pp. 439--450. ACM (2000)

\bibitem{banerjee2011privacy}
Banerjee, M., Chakravarty, S.: Privacy preserving feature selection for
  distributed data using virtual dimension. In: Proceedings of the 20th ACM
  international conference on Information and knowledge management. pp.
  2281--2284. ACM (2011)

\bibitem{ben2008fairplaymp}
Ben-David, A., Nisan, N., Pinkas, B.: Fairplaymp: a system for secure
  multi-party computation. In: Proceedings of the 15th ACM conference on
  Computer and communications security. pp. 257--266. ACM (2008)

\bibitem{bianchi2011analysis}
Bianchi, T., Piva, A., Barni, M.: Analysis of the security of linear blinding
  techniques from an information theoretical point of view. In: International
  Conference on Acoustics, Speech and Signal Processing (ICASSP). pp.
  5852--5855. IEEE (2011)

\bibitem{canetti2001selective}
Canetti, R., Ishai, Y., Kumar, R., Reiter, M.K., Rubinfeld, R., Wright, R.N.:
  Selective private function evaluation with applications to private
  statistics. In: Proceedings of the twentieth annual ACM symposium on
  Principles of distributed computing. pp. 293--304. ACM (2001)

\bibitem{de2010practical}
De~Cristofaro, E., Tsudik, G.: Practical private set intersection protocols
  with linear complexity. In: International Conference on Financial
  Cryptography and Data Security. pp. 143--159. Springer (2010)

\bibitem{du2001privacy}
Du, W., Atallah, M.J.: Privacy-preserving cooperative statistical analysis. In:
  Computer Security Applications Conference, 2001. ACSAC 2001. Proceedings 17th
  Annual. pp. 102--110. IEEE (2001)

\bibitem{dwork2008differential}
Dwork, C.: Differential privacy: A survey of results. In: International
  Conference on Theory and Applications of Models of Computation. pp. 1--19.
  Springer (2008)

\bibitem{freedman2004efficient}
Freedman, M.J., Nissim, K., Pinkas, B.: Efficient private matching and set
  intersection. In: International Conference on the Theory and Applications of
  Cryptographic Techniques. pp. 1--19. Springer (2004)

\bibitem{gascon2016secure}
Gascon, A., Schoppmann, P., Balle, B., Raykova, M., Doerner, J., Zahur, S.,
  Evans, D.: Secure linear regression on vertically partitioned datasets. In:
  2016 IEEE Symposium on Security and Privacy (2016)

\bibitem{goldreich1998secure}
Goldreich, O.: Secure multi-party computation. Manuscript. Preliminary version
  pp. 86--97 (1998)

\bibitem{goldreich1987play}
Goldreich, O., Micali, S., Wigderson, A.: How to play any mental game-a
  completeness theorem for protocol with honest majority. In: Proc. 19th ACM
  Symposium on the Theory of Computing. pp. 218--229 (1987)

\bibitem{guyon2003introduction}
Guyon, I., Elisseeff, A.: An introduction to variable and feature selection.
  Journal of machine learning research  \textbf{3}(Mar),  1157--1182 (2003)

\bibitem{henecka2010tasty}
Henecka, W., Sadeghi, A.R., Schneider, T., Wehrenberg, I., et~al.: Tasty: tool
  for automating secure two-party computations. In: Proceedings of the 17th ACM
  conference on Computer and communications security. pp. 451--462. ACM (2010)

\bibitem{katz2014introduction}
Katz, J., Lindell, Y.: Introduction to modern cryptography. CRC press (2014)

\bibitem{kikuchi2013privacy}
Kikuchi, H., Ito, K., Ushida, M., Tsuda, H., Yamaoka, Y.: Privacy-preserving
  distributed decision tree learning with boolean class attributes. In:
  Advanced Information Networking and Applications (AINA), 2013 IEEE 27th
  International Conference on. pp. 538--545. IEEE (2013)

\bibitem{malkhi2004fairplay}
Malkhi, D., Nisan, N., Pinkas, B., Sella, Y., et~al.: Fairplay-secure two-party
  computation system. In: USENIX Security Symposium. vol.~4. San Diego, CA, USA
  (2004)

\bibitem{vaidya2003privacy}
Vaidya, J., Clifton, C.: Privacy-preserving k-means clustering over vertically
  partitioned data. In: Proceedings of the ninth ACM SIGKDD international
  conference on Knowledge discovery and data mining. pp. 206--215. ACM (2003)

\bibitem{vaidya2004privacy}
Vaidya, J., Clifton, C.: Privacy preserving na{\"\i}ve bayes classifier for
  vertically partitioned data. In: SDM. pp. 522--526. SIAM (2004)

\bibitem{vaidya2005privacy}
Vaidya, J., Clifton, C.: Privacy-preserving decision trees over vertically
  partitioned data. In: IFIP Annual Conference on Data and Applications
  Security and Privacy. pp. 139--152. Springer (2005)

\bibitem{wright2004privacy}
Wright, R., Yang, Z.: Privacy-preserving bayesian network structure computation
  on distributed heterogeneous data. In: Proceedings of the tenth ACM SIGKDD
  international conference on Knowledge discovery and data mining. pp.
  713--718. ACM (2004)

\bibitem{yang1997comparative}
Yang, Y., Pedersen, J.O.: A comparative study on feature selection in text
  categorization. In: ICML. vol.~97, pp. 412--420 (1997)

\bibitem{yao1986generate}
Yao, A.C.C.: How to generate and exchange secrets. In: Foundations of Computer
  Science, 1986., 27th Annual Symposium on. pp. 162--167. IEEE (1986)

\end{thebibliography}

\end{document}